\newcommand{\fracc}[2]{\frac{\textstyle{#1}}{\textstyle{#2}}}
\newcommand\redout{\bgroup\markoverwith{\textcolor{red}{\rule[.5ex]{2pt}{0.6pt}}}\ULon}
\newtheorem{cor}{Corollary}
\begin{document}

\title{On the disformal invariance of the Dirac equation}

\author{Eduardo Bittencourt}\email{Eduardo.Bittencourt@icranet.org}
\author{Iarley P.\ Lobo}\email{Iarley.PereiraLobo@icranet.org}
\author{Gabriel G.\ Carvalho}\email{Gabriel.Carvalho@icranet.org}

\affiliation{CAPES Foundation, Ministry of Education of Brazil, Bras\'ilia, Brazil and\\
Sapienza Universit\`a di Roma - Dipartimento di Fisica\\
P.le Aldo Moro 5 - 00185 Rome - Italy}
\pacs{02.40.Ky, 11.30.-j}
\date{\today}

\begin{abstract}
We analyze the invariance of the Dirac equation under disformal transformations depending on the propagating spinor field acting on the metric tensor. Using the Weyl-Cartan formalism, we construct a large class of disformal maps between different metric tensors, respecting the order of differentiability of the Dirac operator and satisfying the Clifford algebra in both metrics. We split the analysis in some cases according to the spinor mass and the norm of the Dirac current, exhibiting sufficient conditions to find classes of solutions which keep the Dirac operator invariant under the action of the disformal group.


\end{abstract}

\maketitle

\section{Introduction}
In the nineties the disformal transformations appeared with some notoriety in the literature through Bekenstein's works \cite{beken1,beken2}, where the possibility of adding more than one Riemannian geometry to a geometrical theory of gravity is revisited and discussed in the realm of the so called Finsler geometries. Since then, investigations of new kinds of symmetry--beyond the well-known external (rotational, translations, boosts etc) and internal (gauge) ones--under which a given dynamical equation could be invariant have increased. As a consequence of this, the disformal transformations have been used with the aim of explaining some of the current open problems in physics, for instance, MOND \cite{beken_mond,milgrom}, modified dispersion relations in quantum gravity phenomenology \cite{amelino}, bimetric theories of gravity \cite{clifton}, scalar-tensor theories \cite{dario}, disformal inflation \cite{nemanja}, chiral symmetry breaking \cite{bitt_nov_faci}, anomalous magnetic moment for neutrinos \cite{nov_bit}, analogue models of gravity \cite{nov_bit_gordon,nov_bit_drag} and others.

The disformal transformations are usually defined in a scenario with a manifold ${\cal M}$ endowed with metrics $g_{\mu\nu}$ and $\widehat g_{\mu\nu}$. Assuming a field $\Phi$ (scalar, vector or spinor) satisfying a given dynamics defined in terms of $g_{\mu\nu}$ on ${\cal M}$, we can induce a dynamical equation for $\Phi$ which takes into account only $\widehat{g}_{\mu\nu}$, or vice versa, if both metric tensors are related. In general, the disformal map is assumed to be $({\cal M},g,\Phi)\mapsto ({\cal M},\widehat{g},\Phi)$ with
\begin{equation}
\label{disf_met_intro}
\widehat g_{\mu\nu}=\alpha(\Phi,\nabla\Phi)g_{\mu\nu} + \Sigma_{\mu\nu}(\Phi,\nabla\Phi),
\end{equation}
where the arbitrary function $\alpha$ is positive definite for any point on ${\cal M}$ and the rank-2 tensor $\Sigma_{\mu\nu}$ depends on $\Phi$ and possibly on its covariant derivatives defined with the pseudo-Riemannian connection associated to the metric $g_{\mu\nu}$. That is, when the dynamics of $\Phi$ defined in terms of $g_{\mu\nu}$ on ${\cal M}$ is rewritten in terms of $\widehat g_{\mu\nu}$, we obtain its propagation with respect to this metric and, therefore, the two dynamics for $\Phi$ are said to be \textit{disformally} equivalent \cite{novetgoul,goulart}.

In this work, neither the disformal metric $\widehat g_{\mu\nu}$ nor the target metric $g_{\mu\nu}$ have any gravitational character (these names will become clear afterwards) and, thereupon, we do not need to impose any dynamics to them. They are seen as the substrata given {\it a priori} where the external field is allowed to propagate. This is crucial for the statement that we are dealing with the same propagating field the whole time and that these two different representations of its dynamics are indeed equivalent. In other words, there is a degeneracy on the choice of the space-time metric. Besides, in this context, we can affirm that this kind of correspondence is helpful in the search of solutions for special classes of PDEs, since a field configuration satisfying its propagating equation in a given metric also verifies the dynamical equation of its disformal equivalent representation.

According to the lines presented in the scalar and electromagnetic cases \cite{erico12,erico13}, we focus here on the disformal transformations applied to spinor fields or, more precisely, we establish under which circumstances one can obtain the disformal invariance of the Dirac equation defined in an arbitrary background metric. When the Dirac current is time-like, we will show that these transformations can be easily implemented to a class of spinors if we use a generalization of Inomata's condition (cf.\ for instance \cite{Inomata}), whose covariant derivative of the spinor is written as a linear combination of the elements of the Clifford algebra. In the case of light-like currents, the conditions for the disformal invariance of the Dirac equation lie only on the properties of the current itself and the tetrad frame used to provide the disformal map.

This paper is organized as follows. In Sec.\ [\ref{math-prelim}], we provide an intrinsic meaning to the disformal maps, i.e., without referring to any coordinate system explicitly. In Sec.\ [\ref{disf_trans}], we introduce the disformal transformations via the Weyl-Cartan formalism in order to work properly with spinor fields in curved space-times. In Sec.\ [\ref{cons_curr_sec}], we analyze the conservation laws of the vector and axial currents in both geometries to guarantee also the physical equivalence of the disformal map. In Sec\ [\ref{DB}], we then define the Dirac equation in the disformal metric and, using the formulas introduced in Sec.\ [\ref{disf_trans}], we rewrite it in the target metric obtaining a nonlinear equation for the spinor field and, in Secs.\ [\ref{spec-clas}] and [\ref{null-case}], we analyze the possible cases exhibiting a class of spinors for which the disformal invariance of the Dirac equation holds true. For completeness, in Sec.\ [\ref{disf_group}] we prove that the disformal maps associated with spinor fields satisfy an Abelian group structure.

\section{Intrinsic definition of the disformal maps}\label{math-prelim}
Traditionally, the disformal transformation is explicitly constructed with the components of the tensors involved in the map (see Eq.\ \ref{disf_met_intro}) and, in principle, this could lead to ambiguities since the disformal maps work easier with the contra-variant components of the metric rather than the formally defined covariant rank 2 tensor. Thus, in this section, we shall define the disformal transformation in an intrinsic way to show that in fact one can choose any representation (covariant or contra-variant) for the disformal metric without loss of generality. Following the lines we shall use throughout the text, we set:
\newtheorem{defi}{Definition}
\begin{defi}
A (pseudo-Riemannian) metric tensor field $g$, in a differentiable manifold ${\cal M}$, is a smooth, symmetric, bilinear, non-degenerate map which assigns a real function to pairs of vector fields of the tangent bundle associated with ${\cal M}$, that is
\begin{equation*}
\begin{array}{lrcl}
g\,:&\Gamma(T{\cal M})\bigotimes \Gamma(T{\cal M})&\rightarrow& {\cal F}({\cal M}),\\[1ex]
&(X,Y)&\mapsto& g(X,Y).
\end{array}
\end{equation*}
\end{defi}
By non-degenerate we mean that the determinant of the matrix formed by the metric components is nonzero at each point of the manifold, namely
$\det [g (\partial_{\mu} , \partial_{\nu} )] |_{p} \neq 0,$ where $\partial_{\mu}$ represent the elements of the vector basis in the tangent space at each $p\in {\cal M}$ and from which we define the metric components $g_{\mu\nu}\doteq g(\partial_{\mu},\partial_{\nu})$. We assume that the metric tensor $g$ has the Lorentzian signature $(+,-,-,-)$.

\newtheorem{prop}{Proposition}\label{prop1}
\begin{prop}
Let ${\cal M}$ be a space-time with metric tensor $g$. Consider also the smooth vector fields $J$ and $I$, such that $g(J,J)=-g(I,I)\doteq J^2$, $g(J,I)=0$ and the regular scalar fields $\alpha, \beta, \gamma$ and $\delta$ depending functionally on $J$ and $I$, such that $\alpha$ is positive definite and $\Upsilon^2=(\alpha+\beta)(\alpha+\gamma) + \delta^2$ is never null. Then, the map
\begin{equation}
\begin{array}{lrcl}
\widehat g\,:&\Gamma(T{\cal M})\bigotimes \Gamma(T{\cal M})&\rightarrow& {\cal F}({\cal M}),\\[1ex]
&(X,Y)&\mapsto& \widehat{g}(X,Y).
\end{array}
\end{equation}
with
\begin{eqnarray}
\widehat{g}(X,Y)\doteq \alpha\, g(X,Y)+\frac{\beta}{J^2}\,g(J,X)\,g(J,Y)-\frac{\gamma}{J^2}\,g(I,X)\,g(I,Y)+\frac{\delta}{J^2}\,[g(J,X)\,g(I,Y)+g(I,X)\,g(J,Y)]
\end{eqnarray}
or, equivalently,
\begin{equation}
\widehat{g}(\ast,\cdot)=\alpha\, g(\ast,\cdot)+\frac{\beta}{J^2}\,g(J,\ast)\otimes g(J,\cdot)-\frac{\gamma}{J^2}\,g(I,\ast)\otimes g(I,\cdot)+\frac{\delta}{J^2}\,[g(J,\ast)\otimes g(I,\cdot)+g(I,\ast)\otimes g(J,\cdot)]
\end{equation}
is also a pseudo-Riemannian metric for ${\cal M}$. Furthermore, if we require $\alpha>0$, $\alpha+\beta>0$ and $\alpha+\gamma>0$, implying that $\Upsilon^2>0$, the causal character is maintained. This is the {\rm disformal transformation} of the metric $g$, and in this case $\widehat{g}$ is called the {\rm disformal metric} associated with the {\rm target metric} $g$.
\end{prop}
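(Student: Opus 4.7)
The plan is to verify the four axioms that characterize a pseudo-Riemannian metric (smoothness, symmetry, bilinearity, non-degeneracy) and then analyze the signature under the sharper positivity assumptions. Smoothness is inherited from the smoothness of $g$, $J$, $I$ and the scalars $\alpha,\beta,\gamma,\delta$; bilinearity follows because each term on the right-hand side is a product of ${\cal F}({\cal M})$-valued bilinear expressions in $(X,Y)$; and symmetry is immediate by inspection since the $\alpha\,g(X,Y)$ and $\beta/J^{2},\gamma/J^{2}$ terms are manifestly symmetric, while the $\delta/J^{2}$ contribution is already written in symmetrized form. The only non-trivial axiom to check is therefore non-degeneracy.

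To handle non-degeneracy I would diagonalize the problem by choosing a frame adapted to the pair $(J,I)$. Since $g(J,J)=J^{2}>0$, $g(I,I)=-J^{2}<0$ and $g(J,I)=0$, the vectors $e_{0}\doteq J/\sqrt{J^{2}}$ and $e_{1}\doteq I/\sqrt{J^{2}}$ are an orthonormal pair (one time-like, one space-like), which I complete to an orthonormal tetrad $\{e_{0},e_{1},e_{2},e_{3}\}$ with $g(e_{\mu},e_{\nu})=\eta_{\mu\nu}=\mathrm{diag}(+,-,-,-)$. Evaluating the defining formula in this frame, the only non-vanishing values of $g(J,e_{\mu})$ and $g(I,e_{\mu})$ are $g(J,e_{0})=\sqrt{J^{2}}$ and $g(I,e_{1})=-\sqrt{J^{2}}$, so the matrix of $\widehat g$ reduces to a block-diagonal form
\begin{equation*}
[\widehat g(e_{\mu},e_{\nu})]=\begin{pmatrix}\alpha+\beta & -\delta & 0 & 0\\ -\delta & -(\alpha+\gamma) & 0 & 0\\ 0 & 0 & -\alpha & 0\\ 0 & 0 & 0 & -\alpha\end{pmatrix}.
\end{equation*}
A short computation then gives $\det[\widehat g(e_{\mu},e_{\nu})]=-\alpha^{2}\,\Upsilon^{2}$, which is non-zero by the standing hypotheses $\alpha>0$ and $\Upsilon^{2}\neq 0$; since the change of basis matrix between $\{e_{\mu}\}$ and $\{\partial_{\mu}\}$ has non-zero determinant, $\det[\widehat g(\partial_{\mu},\partial_{\nu})]\neq 0$ at every point, establishing non-degeneracy.

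For the causal-character statement, under the sharper assumptions $\alpha>0$, $\alpha+\beta>0$, $\alpha+\gamma>0$ the lower block already contributes two negative eigenvalues, and for the upper $2\times 2$ block the determinant equals $-\Upsilon^{2}<0$, so its eigenvalues are real of opposite sign. I would then argue either by Sylvester's criterion—the $(1,1)$ minor $\alpha+\beta$ is positive and the full $2\times 2$ determinant is negative, so the block has signature $(+,-)$—or by a continuous deformation through the one-parameter family $(\beta,\gamma,\delta)\mapsto (t\beta,t\gamma,t\delta)$, $t\in[0,1]$, along which $\Upsilon^{2}$ stays positive and $\widehat g$ remains non-degenerate, so the signature cannot jump from the $t=0$ value $(+,-,-,-)$. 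Either way the Lorentzian signature is preserved, completing the proof.

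The main technical hurdle is really just the bookkeeping of signs when substituting $J$ and $I$ into the defining formula; once the adapted frame is introduced, everything else collapses into the $2\times 2$ block computation above, and the quantity $\Upsilon^{2}$ appears naturally as (minus) the determinant of that block, which is precisely why it controls both non-degeneracy and the preservation of the causal structure.
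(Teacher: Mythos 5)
Your proof is correct, and for the causal-character part it follows exactly the route the paper sketches (an orthonormal basis built from $J/\sqrt{J^2}$ and $I/\sqrt{I^2}$, completed and used to evaluate $\widehat g$). The one place you genuinely diverge is the non-degeneracy step: the paper obtains the determinant of $\widehat g$ via the Cayley--Hamilton formula applied to the mixed tensor $\widehat g^{\mu}{}_{\nu}$ (a basis-independent trace computation, the same device used later for Eq.~(\ref{det})), whereas you read it off from the block-diagonal matrix in the adapted frame, getting $\det[\widehat g(e_\mu,e_\nu)]=-\alpha^2\Upsilon^2$ directly. Your route is more economical here because the single adapted-frame computation delivers both non-degeneracy and the signature at once, and it makes transparent why $\Upsilon^2$ is exactly (minus) the determinant of the $J$--$I$ block; the paper's Cayley--Hamilton route has the compensating advantage of working in an arbitrary coordinate basis and of generalizing immediately to the relation between $\det\widehat g$ and $\det g$ needed in Sec.~\ref{cons_curr_sec}. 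Your signature argument is also slightly sharper than strictly needed: since the $2\times2$ block has determinant $-\Upsilon^2<0$, its eigenvalues have opposite signs regardless of the sign of $\alpha+\beta$, so the Sylvester/deformation step is really only using $\alpha>0$ and $\Upsilon^2>0$ (the hypotheses $\alpha+\beta>0$, $\alpha+\gamma>0$ serving to guarantee the latter). All computations check out.
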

\begin{proof}
It is straightforward from Definition $1$ by using Cayley-Hamilton's formula. From this, we derive that the determinant of the disformal metric is non-degenerate only if $\Upsilon^2\neq 0$. The requirements for keeping the causal structure are obtained if we choose an orthonormal basis $\cal{B}$ of the tangent space constructed in terms of $J$ and $I$, completing this basis, and evaluating $\widehat{g}$ in the elements of $\cal{B}$.
\end{proof}
\par

The eigenvalue problem associated with the quantity $\widehat{g}^{\mu}{}_{\ \nu}\equiv\widehat{g}^{\mu\alpha}{g}_{\alpha\nu}$ provides the following nontrivial eigenvalues
\begin{gather}
\lambda_{\pm}= \alpha + \frac{ \gamma+\beta \pm \sqrt{(\beta-\gamma)^2 -4\delta^2}}{2},
\end{gather}
with eigenvectors $v_{\pm}$ given by
\begin{gather}
v_{\pm}=J+\frac{\beta-\lambda_{\pm}}{\delta}\,I.
\end{gather}
The other two eigenvalues are degenerated and equal to $\alpha$, with eigenvectors lying on the orthogonal complement of $J$ and $I$.

Moreover, from Levi-Civita's theorem, a given metric tensor on the manifold possesses a unique affine connection, called Riemannian connection ($\nabla$), satisfying the requirements of symmetry and compatibility with the given metric (i.e., $\nabla g = 0$). Therefore, for the metric tensors $g$ and $\widehat{g}$ on ${\cal M}$ we can relate the Riemannian connections $\nabla$ and $\widehat{\nabla}$, respectively and unequivocally.

From the proposition above, one sees that the attribution of space-time components to the metrics associated with the disformal map does not lead to contradictions because the map itself has an intrinsic definition. For this reason, one can use in practice either the contra-variant or the covariant components of both tensors, according to convenience.

\section{Construction of the disformal transformation for spinors}\label{disf_trans}
In this section we develop the mathematical tools we need to deal with spinor fields in a curved space-time. As ensured by Proposition $1$, we shall use space-time components to favour the calculations and to present our results in a standard language. We mainly concentrate on the Clifford algebra that the Dirac matrices must satisfy in the disformal and target metrics and on the construction of the disformal map between these metrics through the Weyl-Cartan formalism.

With the Dirac matrices $\gamma^{\mu}$ and an arbitrary Dirac spinor field (bispinor) $\Psi$ defined in a space-time metric $g_{\mu\nu}$, one can construct two Hermitian scalars $A\equiv\bar\Psi \Psi$ and $B\equiv i \bar\Psi \gamma_5 \Psi$, where $\bar \Psi \equiv \Psi^{\dag} \gamma^0$ and $\gamma_5 \equiv \frac{i}{4!}\,\eta_{\alpha\beta\mu\nu} \gamma^\alpha \gamma^\beta \gamma^\mu \gamma^\nu$ with $\eta_{\alpha\beta\mu\nu}$ corresponding to the Levi-Civita tensor. Throughout the text the Dirac representation for $\gamma^{\mu}$ is assumed. We also define two space-time vectors depending algebraically on $\Psi$ which are the Dirac current $J^{\mu} \equiv \bar\Psi_{} \gamma^{\mu} \Psi_{}$ and the axial current $I^{\mu} \equiv \bar\Psi_{} \gamma^{\mu} \gamma_5 \Psi_{}$. Using Pauli-Kofink's identity \cite{spin_curv}
\begin{equation}
\label{pkofink}
(\bar\Psi Q\gamma_{\lambda}\Psi)\gamma^{\lambda}\Psi=(\bar\Psi Q\Psi)\Psi - (\bar\Psi Q\gamma_5\Psi)\gamma_5\Psi,
\end{equation}
where $Q$ is an arbitrary element of the Clifford algebra, one can see that the currents are related to the scalars through $J^{\mu}J_{\mu}=-I^{\mu}I_{\mu}=A^2 + B^2$ and $J_{\mu}I^{\mu}=0$.

According to the spin-2 field theory formulation \cite{feynman}, any space-time metric $\widehat{g}_{\mu\nu}$ can be always split into two parts
\begin{equation}
\label{met}
\widehat{g}_{\mu\nu} = g_{\mu\nu} + \Sigma_{\mu\nu},
\end{equation}
with a background geometry $g_{\mu\nu}$ and a rank two tensor field $\Sigma_{\mu\nu}$ responsible for the spin-2 particle description. Under this form, the inverse metric $\widehat{g}^{\mu\nu}$ is given by an infinite series, in general. However, in the case of disformal metrics expressed by Eq.\ (\ref{met}), their inverse admits the same binomial form if we choose $\Sigma_{\mu\nu}$ such that the condition $\Sigma^{\mu\nu} \, \Sigma_{\nu\lambda} = p \, \delta^{\mu}_{\lambda} + q \, \Sigma^{\mu}{}_{\lambda}$ holds, where $p$ and $q$ are arbitrary functions of the coordinates.

In this way, we shall proceed with the disformal transformation using directly the components of the space-time geometry, as we said before. Thereby, once the Dirac equation involves only first order derivatives of $\Psi$, the most general expression one can attribute to the disformal metric making use only of the spinor currents is
\begin{equation}
\widehat{g}^{\mu \nu}=\alpha g^{\mu\nu} + \beta\frac{J^{\mu}J^{\nu}}{J^2} - \gamma\frac{I^{\mu}I^{\nu}}{J^2} + \delta\frac{J^{(\mu}I^{\nu)}}{J^2},
\label{disformalmetric}
\end{equation}
where parentheses indicate symmetrization, $J^2\equiv g^{\mu\nu}J_{\mu}J_{\nu}$ and $\alpha$, $\beta$, $\gamma$ and $\delta$ are arbitrary functions of the scalars $A$ and $B$. Using the relation $\widehat g^{\mu\nu}\, \widehat g_{\nu\lambda} = \delta^\mu_{\ \lambda}$ to guarantee that $\widehat g^{\mu\nu}$ corresponds indeed to a metric tensor, the components of the inverse metric tensor can be written as
\begin{equation}
\label{inv_disformalmetric}
\widehat g_{\mu\nu} = \frac{1}{\alpha} g_{\mu\nu} -\frac{(\alpha+\gamma)\beta+\delta^2}{\alpha \Upsilon^2}\frac{J_{\mu}J_{\nu}}{J^2} + \frac{(\alpha+\beta)\gamma+\delta^2}{\alpha \Upsilon^2}\frac{I_{\mu}I_{\nu}}{J^2} -\frac{\delta}{\Upsilon^2} \frac{J_{(\mu}I_{\nu)}}{J^2},
\end{equation}
with $\Upsilon^2 \equiv (\alpha + \beta)(\alpha + \gamma) + \delta^2$. Metrics in the form (\ref{disformalmetric}) are similar to the class of Finsler geometries introduced by Bekenstein \cite{beken2}, in which the space-time metric depends on some propagating field. We emphasize that $\widehat g^{\mu\nu}$ and $g^{\mu\nu}$ have no gravitational character, that is, the fact that the geometry (\ref{disformalmetric}) and $g^{\mu\nu}$ are not necessarily flat has nothing to do with any geometric theory of gravity.

Let us now introduce the tetrad frames \cite{tetrad1} (or vierbeins \cite{tetrad2}), in order to perform properly the disformal transformation on the spinorial dynamics (see details on the Weyl-Cartan formalism also in \cite{spin_curv}). This simplifies considerably the calculations since we are dealing with an object (spinor) that has a particular internal space and is defined on a curved background, at the same time. We then start by rewriting the quantities defined above in terms of the tetrads. We denote by $\widehat{e}^{\mu}{}_{(A)}$ the tetrad basis acting on the disformal metric $\widehat g^{\mu\nu}$ and $e^{\mu}{}_{(A)}$ the one acting on the target metric $g^{\mu\nu}$. Each orthonormal basis is composed by one time-like and three space-like vectors, with both satisfying the conditions
\begin{equation}
\label{tetrad}
\eta_{AB} = \widehat{g}_{\mu\nu}\,\widehat e^{\mu}{}_{(A)} \, \widehat e^{\nu}{}_{(B)} = g_{\mu\nu}\,e^{\mu}{}_{(A)} \, e^{\nu}{}_{(B)}.
\end{equation}
Note that the Greek indices (running from $0$ to $3$) are lowered and raised by their corresponding space-time metric ($g_{\mu\nu}$ or $\widehat g_{\mu\nu}$) and the Latin labels (running from $1$ to $4$) are lowered and raised with $\eta_{AB}=\mbox{diag}(1,-1,-1,-1)$. Henceforth, the tetrad indexes are denoted without parentheses. Using Eq.\ (\ref{tetrad}), we can define the inverse tetrad bases $e_{\mu}{}^{A}$ and $\widehat{e}_{\mu}{}^{A}$ from the conditions $ e_{\mu}{}^{A} \, e^{\nu}{}_{A} = \widehat e_{\mu}{}^{A} \,\widehat e^{\nu}{}_{A} = \delta_{\mu}^\nu$ and $e_{\mu}{}^{A} \, e^{\mu}{}_{B}= \widehat e_{\mu}{}^{A} \,\widehat e^{\mu}{}_{B}= \delta^{A}_{B}$. The Dirac matrices and the tetrad bases associated with each space-time are such that
\begin{equation}
\gamma^A = \widehat{e}_{\mu}{}^{A} \, \widehat\gamma^\mu= e_{\mu}{}^{A} \, \gamma^\mu,
\label{gammadef}
\end{equation}
where $\gamma^A$'s are the constant Dirac matrices. Then, we can state and prove the following

\newtheorem{lemma}{Lemma}\label{lemma}
\begin{lemma}
Each $\gamma_5$ matrix constructed with $\widehat{\gamma}^{\mu}$, $\gamma^{\mu}$ or $\gamma^A$ corresponds indeed to the same matrix.
\end{lemma}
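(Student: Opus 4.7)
The plan is to show that all three definitions of $\gamma_5$ collapse to the constant matrix $\gamma_5^{\text{(flat)}}\equiv\tfrac{i}{4!}\,\eta_{ABCD}\,\gamma^A\gamma^B\gamma^C\gamma^D$ built from the constant Dirac matrices. Since $\gamma^A$ is the same object in both tetrad frames (by Eq.\ \eqref{gammadef}), once the two curved-space versions have each been reduced to the flat one, the lemma follows by transitivity.

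First, I would use Eq.\ \eqref{gammadef} to trade the curved Dirac matrices for the constant ones: $\gamma^\mu=e^\mu{}_A\,\gamma^A$ and $\widehat\gamma^\mu=\widehat e^\mu{}_A\,\gamma^A$. Plugging these into the target-metric definition, the product of four Dirac matrices becomes $e^\alpha{}_A e^\beta{}_B e^\mu{}_C e^\nu{}_D\,\gamma^A\gamma^B\gamma^C\gamma^D$, and analogously with hats in the disformal case. The problem is then reduced to proving the tensorial identity
\begin{equation*}
\eta_{\alpha\beta\mu\nu}\,e^\alpha{}_A\,e^\beta{}_B\,e^\mu{}_C\,e^\nu{}_D=\eta_{ABCD},
\end{equation*}
and its hatted analogue. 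This is the standard statement that the Levi-Civita tensor is the tetrad transform of the flat $\epsilon$-symbol: writing $\eta_{\alpha\beta\mu\nu}=\sqrt{-g}\,\epsilon_{\alpha\beta\mu\nu}$ and using that $\det(e^\mu{}_A)=1/\sqrt{-g}$ (which follows from $g_{\mu\nu}=\eta_{AB}\,e_\mu{}^A e_\nu{}^B$, so $\det g=-[\det(e_\mu{}^A)]^2$), the factor of $\sqrt{-g}$ is canceled exactly by the Jacobian of the tetrad, leaving $\epsilon_{ABCD}=\eta_{ABCD}$. The identical computation applies in the disformal frame, with $\sqrt{-\widehat g}$ and $\widehat e^\mu{}_A$ instead.

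Combining the two steps, each of $\gamma_5$ (target) and $\widehat\gamma_5$ (disformal) equals $\tfrac{i}{4!}\,\eta_{ABCD}\,\gamma^A\gamma^B\gamma^C\gamma^D$, which is precisely the flat definition; hence the three matrices coincide.

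The main obstacle, if any, is purely a bookkeeping one: one must be careful with the convention that $\eta_{\alpha\beta\mu\nu}$ denotes the Levi-Civita \emph{tensor} (not the symbol) so that the $\sqrt{-g}$ and $\sqrt{-\widehat g}$ densities cancel against the respective tetrad determinants, and with the orientation so that the sign of $\epsilon_{ABCD}$ is fixed consistently in both frames. No further ingredient is needed beyond Eq.\ \eqref{gammadef} and the algebraic properties of the tetrads already introduced.
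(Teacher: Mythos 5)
Your proposal is correct and follows essentially the same route as the paper: the authors' proof also consists of substituting the tetrad relations into the definition of $\gamma_5$ and rewriting the metric determinants (hence the $\sqrt{-g}$ and $\sqrt{-\widehat g}$ in the Levi-Civita tensors) via the tetrad determinants, so that everything collapses to the flat expression $\tfrac{i}{4!}\,\eta_{ABCD}\,\gamma^A\gamma^B\gamma^C\gamma^D$. Your version merely spells out the cancellation $\eta_{\alpha\beta\mu\nu}\,e^\alpha{}_A e^\beta{}_B e^\mu{}_C e^\nu{}_D=\eta_{ABCD}$ explicitly, which the paper leaves implicit.
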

\begin{proof}
Using Eqs.\ (\ref{tetrad}) and (\ref{gammadef}) to rewrite the determinants of $g_{\mu\nu}$ and $\widehat{g}_{\mu\nu}$ in terms of the tetrad bases in the definition of $\gamma_5$, the proof follows without further problems.
\end{proof}

Furthermore, the $\widehat \gamma^{\mu}$, $\gamma^{\mu}$ and $\gamma^A$ must satisfy their respective closure relation, namely
\begin{equation}
\label{clif_mink}
\{\widehat \gamma^{\mu}, \widehat \gamma^{\nu}\} = 2 \, \widehat g^{\mu\nu}\, {\bf 1},\qquad \{\gamma^{\mu},\gamma^{\nu}\} = 2 \, g^{\mu\nu} {\bf 1}, \quad \mbox{and} \quad \{\gamma^A, \gamma^B\} = 2 \, \eta^{AB}\, {\bf 1},
\end{equation}
where ${\bf 1}$ is the identity element of the algebra and the curly brackets represent the anti-commutation operator.

For consistency with the definitions above, both tetrad bases must be related somehow. Hence, motivated by the algebraic form of the metric (\ref{disformalmetric}), we set
\begin{equation}
\label{mapa_tetr}
\widehat e^{\mu}{}_{A}= a\,e^{\mu}{}_{A} + \Theta^{\mu}{}_{A},
\end{equation}
where the most general expression for $\Theta^{\mu}{}_{A}$ is given by
\begin{equation}
\label{gene_theta}
\Theta^{\mu}{}_{A}= \frac{1}{J^2}[J^{\mu}(bJ_{A}+cI_{A})+I^{\mu}(dJ_{A}+fI_{A})],
\end{equation}
with functions $a, b, c, d$ and $f$ linked to the coefficients of the metric through
\begin{equation}
\label{coef_tet_met}
\alpha=a^2, \quad \beta=b(b+2a)-c^2, \quad \gamma=f(f-2a)-d^2, \quad \mbox{and} \quad \delta=(a-f)c+(a+b)d.
\end{equation}

It is straightforward to verify that the inverse tetrad basis will have a similar form to (\ref{mapa_tetr}), as follows
\begin{equation}
\label{inv_mapa_tetr}
\widehat e_{\mu}{}^{A}= \frac{1}{a} e_{\mu}{}^{A} - \frac{1}{J^2\,\Upsilon} \left[J_{\mu}\left(\frac{b(a-f)+cd}{a}J^{A} + dI^{A}\right) + I_{\mu}\left(c J^{A} + \frac{f(a+b)-cd}{a}I^{A}\right)\right],
\end{equation}
where, in terms of the tetrad coefficients, $\Upsilon\equiv (a+b)(a-f)+cd$. In conclusion, these are the mathematical ingredients on the Weyl-Cartan formalism we need to apply the disformal transformation to the Dirac equation. Nevertheless, we have to analyze first the conservation laws for the spinor currents in both space-times before entering into the details of the dynamical equations for $\Psi$ on the space-times in concern.

\section{Conservation of the currents}\label{cons_curr_sec}
Once we are interested in the disformal invariance of a given dynamics for $\Psi$, it is expected that the action of the disformal transformation on the conservation laws of the spinor currents in a given metric implies the conservation of the corresponding currents in the other metric if these geometries are linked. Though it is not a direct consequence of the map, it can be true when the arbitrary functions of $\widehat{g}^{\mu\nu}$ satisfy some extra constraint equations.

Let us define the Dirac currents $J^{\mu}=\bar{\Psi}\gamma^{\mu}\Psi$ and $\widehat{J}^{\mu}=\bar{\Psi}\widehat{\gamma}^{\mu}\Psi$ associated with the metrics $g_{\mu\nu}$ and $\widehat{g}_{\mu\nu}$, respectively and, using Lemma $1$, the axial currents as $I^{\mu} = \bar{\Psi} \gamma^{\mu} \gamma_5\Psi$ and $\widehat{I}^{\mu} = \bar{\Psi} \widehat{\gamma}^{\mu} \gamma_5\Psi$. This is necessary in order to guarantee the conservation of the currents just as a consequence of the dynamical equation for $\Psi$ in the disformal metric. Thus, imposing the following laws for the currents
\begin{equation}
\label{cons_cur_hat}
\widehat{\nabla}_{\mu}\widehat{J}^{\mu}=0,\quad \mbox{and}\quad \widehat{\nabla}_{\mu}\widehat{I}^{\mu}={\cal S}_d(A,B),
\end{equation}
where $\widehat{\nabla}_{\mu}$ corresponds to the covariant derivative compatible with $\widehat g_{\mu\nu}$ and ${\cal S}_d(A,B)$ is a source term obtained from the combination of the dynamics for $\Psi$ and its complex conjugate both defined in $\widehat g^{\mu\nu}$. If $\Psi$ satisfy the Dirac equation, then ${\cal S}_d$ is proportional to the mass.

Using the Cayley-Hamilton formula, we relate the determinant $\widehat g$ of $\widehat g_{\mu\nu}$ (given by Eq.\ \ref{inv_disformalmetric}) with the determinant $g$ of $g_{\mu\nu}$, that is
\begin{equation}
\widehat g = \frac{g}{a^4\,\Upsilon^2}.
\label{det}
\end{equation}
The definitions of $\widehat J^{\mu}$ and $\widehat I^{\mu}$ together with Eq.\ (\ref{gammadef}) allow us to write $\widehat{J}^{\mu} = (a+b)J^{\mu} + dI^{\mu}$ and $\widehat{I}^{\mu}=(a-f)I^{\mu}-cJ^{\mu}$ and, consequently, Eqs.\ (\ref{cons_cur_hat}) have their analogs in $g_{\mu\nu}$, as follows
\begin{eqnarray}
\widehat{\nabla}_{\mu}\widehat{J}^{\mu}=\frac{a^2 \Upsilon}{\sqrt{-g}}\partial_{\mu}\left\{\frac{\sqrt{-g}}{a^2 \Upsilon}[(a+b)J^{\mu}+dI^{\mu}]\right\}=0,\\
\widehat{\nabla}_{\mu}\widehat{I}^{\mu}=\frac{a^2 \Upsilon}{\sqrt{-g}}\partial_{\mu}\left\{\frac{\sqrt{-g}}{a^2 \Upsilon}[(a-f)I^{\mu}-cJ^{\mu}]\right\}={\cal S}_d,
\end{eqnarray}
where $\partial_{\mu}$ denotes partial derivative. Introducing new variables $X\equiv \ln[(a+b)/a^2 \Upsilon]$, $Y\equiv \ln[(a-f)/a^2 \Upsilon]$, $Z\equiv \ln(d/a^2 \Upsilon)$ and $W\equiv\ln(c/a^2\Upsilon)$ and rearranging the terms, we get
\begin{eqnarray}
&&\nabla_{\mu}J^{\mu} + J^{\mu}\partial_{\mu}X + e^{Z-X}(\nabla_{\mu}I^{\mu} + I^{\mu}\partial_{\mu}Z)=0,\\
&&e^{Y}(\nabla_{\mu}I^{\mu} + I^{\mu}\partial_{\mu}Y) - e^{W}(\nabla_{\mu}J^{\mu} + J^{\mu}\partial_{\mu}W)=\frac{{\cal S}_d}{a^2\Upsilon},
\end{eqnarray}
with $\nabla_{\mu}$ corresponding to the covariant derivative compatible with $g_{\mu\nu}$. In order that the conservation of the currents also holds in the target metric, that is $\nabla_{\mu}J^{\mu}=0$ and $\nabla_{\mu}I^{\mu}={\cal S}_t(A,B)$\footnote{The indexes ``d" and ``t" are used to indicate different sources for the equation of the axial current coming from the dynamics of $\Psi$ in the disformal and target metrics, respectively.}, it is sufficient that the coefficients of the metric $\widehat g_{\mu\nu}$ satisfy the conditions
\begin{subequations}
\label{cond_cons_cur}
\begin{eqnarray}
&&J^{\mu}\partial_{\mu}(e^{X}) + I^{\mu}\partial_{\mu}(e^{Z})=0, \label{cond_cons_cur1}\\[1ex]
&&I^{\mu}\partial_{\mu}(e^{Y}) - J^{\mu}\partial_{\mu}(e^{W}) = \frac{{\cal S}_d+(f-a){\cal S}_t}{a^2\Upsilon}. \label{cond_cons_cur2}
\end{eqnarray}
\end{subequations}
It should be noticed that the existence of conserved currents describing the probability flux of the spinor lead us necessarily to requirements other than the pure disformal map between the dynamics of $\Psi$, contrary to the scalar and electromagnetic cases \cite{erico12,erico13}; otherwise, the map would be mathematically well-defined, but meaningless from the physical point of view. The system (\ref{cond_cons_cur}) involving $a, b, c, d$ and $f$ suggest that the spinor case cannot be naively compared with the previous ones \cite{erico12,erico13}, which impose only algebraic constraints on the metric coefficients. Therefore, disformal transformations applied to spinor fields induce first-order quasi-linear differential equations for the metric coefficient, apart from the Clifford algebra which must be satisfied.


\section{The disformal transformation of the Dirac equation}\label{DB}
In this section, we shall describe how the disformal transformations act on dynamical equations for spinor fields. In particular, we consider a spinor field $\Psi$ satisfying a given dynamics in the disformal metric $\widehat g_{\mu\nu}$ and then applying the disformal map we induce a dynamical equation for $\Psi$ in the target metric $g_{\mu\nu}$, explaining why this metric is called target. After, we also look for conditions on the metric coefficients such that the induced dynamics corresponds to the exact Dirac equation. The way we will apply the disformal transformation here corresponds to a mathematical strategy of course. The dynamics for $\Psi$ is reasonably solvable only if we start from the target metric (which is $\Psi$-independent and given \textit{a priori}) and then construct the disformal metric, where the equation for $\Psi$ defined in terms of $\widehat g_{\mu\nu}$ is automatically verified; otherwise, we need to integrate a nonlinear equation for $\Psi$ in the disformal metric.

From the beginning, we know that the disformal invariance of the massive Dirac equation will be valid only in specific cases, once the disformal transformations contain the conformal ones as a particular case and the latter do not let this equation invariant (see Ref.\ \cite{kastrup} and references therein). As we shall see, the break of the disformal invariance is due solely to the presence of the conformal factor. An alternative to bypass this difficulty would be, for instance, to resort to scenarios where conformal transformations play an important role in the definition of the mass (cf.\ Ref.\ \cite{faraoni}).

The choice of $\widehat g_{\mu\nu}$ and the symmetries associated with $J^{\mu}$ and $I^{\mu}$, evidenced by the equations above, indicate that there is no need of all disformal terms in (\ref{disformalmetric}). So, the disformal transformation can be equally implemented by using only one of the spinor currents. Thus, as a matter of simplicity, we set $\gamma=\delta=0$ in Eqs.\ (\ref{disformalmetric}) and (\ref{inv_disformalmetric}) and define a normalized four-vector $V^{\mu}\equiv J^{\mu}/\sqrt{J^2}$ so that the disformal metric (\ref{disformalmetric}) becomes\footnote{Analogously, one could set $\beta=\delta=0$ and define $V_{\mu}=I_{\mu}/\sqrt{J^2}$, but the special form of (\ref{disformalmetric}) would lead to similar conclusions. Assuming $\beta=\gamma=0$ and $\delta\neq0$ do not provide a genuine disformal transformation once the inverse metric is not given by a binomial expression (see Eq.\ \ref{inv_disformalmetric}).}
\begin{equation}
\label{simp_disf_met}
\widehat g^{\mu\nu}=\alpha g^{\mu\nu}+\beta\, V^{\mu}V^{\nu}.
\end{equation}

In this way, we start with a modified Dirac equation for $\Psi$, where the mass term takes into account the problem introduced by the conformal transformations, written in terms of the metric (\ref{simp_disf_met}) which reads\footnote{Note that when the conformal factor $\alpha$ of Eq.\ (\ref{simp_disf_met}) is constant, we rescue the genuine mass term of the Dirac equation.}
\begin{equation}
\label{Dirac}
i\widehat\gamma^\mu\widehat\nabla_{\mu}\Psi -\sqrt{\alpha}\,m\Psi=0,
\end{equation}
with $\widehat\nabla_{\mu}\equiv\partial_{\mu} - \widehat\Gamma_{\mu}$ and the Fock-Ivanenko connection $\widehat\Gamma_{\mu}$ given by
\begin{equation}
\label{christoffel}
\widehat\Gamma_{\mu} = -\frac{1}{8}\left( [\widehat\gamma^{\alpha},\partial_{\mu}\widehat\gamma_{\alpha}] - \widehat \Gamma^{\rho}_{\alpha\mu}[\widehat\gamma^{\alpha},\widehat\gamma_{\rho}] \right).
\end{equation}
The squared brackets denote the usual commutator and the Christoffel symbol $\widehat \Gamma^{\rho}_{\alpha\mu}$ are constructed with $\widehat g_{\mu\nu}$. We are using units in which $c=\hbar=1$. The aim of this paper can be summarized in the following
\newtheorem{theorem}{Theorem}\label{theorem1}
\begin{theorem}
Let ${\cal M}$ be a space-time with metric tensors $g$ and $\widehat{g}$ and smooth scalar functions $\alpha$ and $\beta$, such that $\alpha$ is positive definite and $\alpha+\beta>0$. Consider a smooth (at least ${\cal C}^2$), normalized, time-like vector field V and a disformal relation between the metrics as
\begin{equation}
\widehat{g}(\ast,\cdot)=\alpha\, g(\ast,\cdot)+\beta\,g(V,\ast)\otimes g(V,\cdot).
\end{equation}
Let $\Psi$ be a Dirac spinor field satisfying the modified Dirac equation\ (\ref{Dirac}) in the disformal metric $\widehat{g}$ and let $V^{\mu} \equiv \bar{\Psi} \gamma^{\mu} \Psi/\sqrt{g^{\alpha\beta} J_{\alpha} J_{\beta}}$ in space-time coordinates, then there exists a class of $\Psi$'s which verifies the Dirac equation (massive or not) written in terms of the target metric $g$.
\end{theorem}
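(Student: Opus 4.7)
The plan is to exploit the minimal tetrad realization of the simplified disformal metric (\ref{simp_disf_met}). Setting $c=d=f=0$ in Eqs.\ (\ref{coef_tet_met}) forces $a=\sqrt{\alpha}$ and $b=\sqrt{\alpha+\beta}-\sqrt{\alpha}$, so that (\ref{mapa_tetr}) reduces to $\widehat{e}^{\mu}{}_{A} = \sqrt{\alpha}\,e^{\mu}{}_{A} + (b/J^{2})\,J^{\mu}J_{A}$, and consequently $\widehat{\gamma}^{\mu} = \sqrt{\alpha}\,\gamma^{\mu} + b\,V^{\mu}\,V_{\lambda}\gamma^{\lambda}$ via (\ref{gammadef}). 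Using the Pauli-Kofink identity (\ref{pkofink}) with $Q={\bf 1}$, together with $\bar\Psi\gamma_{5}\Psi=-iB$, the operator $V_{\lambda}\gamma^{\lambda}$ acts on $\Psi$ as $(A+iB\gamma_{5})\Psi/\sqrt{J^{2}}$, so any expression involving $\widehat{\gamma}^{\mu}\Psi$ collapses into a finite combination of Clifford-algebra elements acting on $\Psi$.

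The next step is to translate the disformal Fock-Ivanenko connection (\ref{christoffel}). Using the inverse tetrad (\ref{inv_mapa_tetr}) with the same $c=d=f=0$ choice, and writing $\widehat{\Gamma}^{\rho}_{\alpha\mu}$ as $\Gamma^{\rho}_{\alpha\mu}$ plus a correction computed directly from Eq.\ (\ref{inv_disformalmetric}), one obtains a splitting $\widehat{\Gamma}_{\mu} = \Gamma_{\mu} + \Delta_{\mu}$, where $\Gamma_{\mu}$ is the Fock-Ivanenko connection built with $g$ and $\Delta_{\mu}$ is algebraic in the Clifford basis, involving derivatives of $\alpha$, $\beta$ and $V^{\nu}$ (the latter being linear in $\nabla_{\mu}\Psi$ through the definition of $J^{\nu}$). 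Inserting this decomposition and the decomposition of $\widehat{\gamma}^{\mu}$ into Eq.\ (\ref{Dirac}), and noting that the explicit $\sqrt{\alpha}$ factor in the mass term was placed precisely to absorb the overall rescaling produced by $a=\sqrt{\alpha}$ in the kinetic term, yields an equation of the schematic form $i\gamma^{\mu}\nabla_{\mu}\Psi - m\Psi + {\cal R}(\Psi)=0$, where $\cal R$ gathers the $V^{\mu}V_{\lambda}\gamma^{\lambda}\nabla_{\mu}\Psi$ piece together with every contribution coming from $\Delta_{\mu}$, again reduced via Pauli-Kofink to a linear combination of Clifford elements acting on $\Psi$.

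The main obstacle, and the step that selects the \emph{class} of spinors mentioned in the statement, is forcing ${\cal R}(\Psi)=0$. The natural strategy, following the generalized Inomata condition anticipated in the Introduction, is to postulate $\nabla_{\mu}\Psi = \Omega_{\mu}\Psi$ for some $\Omega_{\mu}$ expanded over the full Clifford basis $\{{\bf 1},\gamma_{5},\gamma^{\nu},\gamma^{\nu}\gamma_{5},[\gamma^{\nu},\gamma^{\sigma}]\}$ with coefficients depending on $A$, $B$, $V^{\mu}$, $\alpha$, and $\beta$. The identity ${\cal R}=0$ then decomposes, projector by projector on the Clifford basis, into a finite list of algebraic constraints on those coefficients together with a few first-order quasi-linear PDEs for $\alpha$ and $\beta$, which must additionally be compatible with the conservation conditions (\ref{cond_cons_cur}). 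Demonstrating the non-emptiness of the solution set, and splitting it naturally into the time-like and light-like $J^{\mu}$ sub-cases, is the technical core to be handled in Secs.\ [\ref{spec-clas}] and [\ref{null-case}]; once such a $\Psi$ exists, it automatically satisfies the target Dirac equation $i\gamma^{\mu}\nabla_{\mu}\Psi - m\Psi = 0$, with the massless sub-case recovered when the algebraic subsystem forces $m=0$.
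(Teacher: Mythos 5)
Your outline retraces the paper's own route: the diagonal tetrad choice $a=\sqrt{\alpha}$, $b=\sqrt{\alpha+\beta}-\sqrt{\alpha}$, the splitting of the Fock--Ivanenko connection into the target connection plus an algebraic remainder, the reduction of the transformed equation to $i\gamma^{\mu}\nabla_{\mu}\Psi-m\Psi+{\cal R}(\Psi)=0$, and an Inomata-type ansatz projected onto the Clifford basis. However, there are two genuine gaps. First and foremost, the substantive claim of the theorem is the \emph{non-emptiness} of the class of spinors for which ${\cal R}(\Psi)=0$, and your proposal explicitly defers exactly that step (``the technical core to be handled in Secs.\ [\ref{spec-clas}] and [\ref{null-case}]''). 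The paper discharges it by writing the explicit ansatz (\ref{der_cond}), showing that contraction with $\gamma^{B}$ yields the target Dirac equation once $s_{3}=-4s_{1}$ and $s_{4}=-4s_{2}$, and then projecting the residual of (\ref{simp_disf_dirac}) onto the linearly independent Clifford elements to fix $s_{0}$ and $s_{2}$ algebraically, Eqs.\ (\ref{cond_mink_disf1})--(\ref{cond_mink_disf2}), and to reduce the remaining freedom to the single condition (\ref{cond_mink_disf3}) for $s_{1}$, whose solvability is guaranteed by the Helmholtz decomposition together with the conservation constraints (\ref{cond_corr}) and (\ref{cond_axial_mink}). Without some version of this consistency analysis the theorem is merely restated, not proved.

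Second, your ansatz $\nabla_{\mu}\Psi=\Omega_{\mu}\Psi$ with coefficients depending only on $A$, $B$, $V^{\mu}$, $\alpha$ and $\beta$ is the \emph{semilinear} Inomata condition, and the paper points out that it cannot work here: the residual ${\cal R}$ contains the vorticity $\omega^{A}$ of the normalized current, i.e.\ first derivatives of $\Psi$, so the Clifford projection cannot close unless the ansatz coefficients themselves carry quasi-linear, $\omega^{A}$-dependent terms (the $s_{1},\dots,s_{4}$ terms of Eq.\ \ref{der_cond}). Alternatively one would have to express $\omega^{A}$ algebraically through the ansatz itself, which adds further integrability constraints you have not examined. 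A minor additional point: the light-like case is not a sub-case ``forced by the algebraic subsystem''; since $V^{\mu}$ is ill-defined when $J^{2}=0$, the paper must redefine the disformal term via $\beta\rightarrow\beta J^{2}$, $b\rightarrow bJ^{2}$ and rederive the (much simpler) conditions (\ref{cons_null}) separately.
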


\begin{cor}
If the conformal coefficient $\alpha$ of (\ref{simp_disf_met}) is constant or the spinor is massless, then $\Psi$ is a {\rm disformally invariant solution} of the Dirac operator.
\end{cor}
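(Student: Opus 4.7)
The plan is to reduce the modified Dirac equation (\ref{Dirac}) on $\widehat{g}$ to the ordinary Dirac equation on $g$ by systematically expanding every disformal object in terms of target-metric quantities, and then identifying an algebraic class of spinors on which the residual terms vanish.

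First, I would insert the simplified tetrad map (\ref{mapa_tetr}) (with $c=d=f=0$) into $\widehat{\gamma}^\mu = \widehat{e}^\mu{}_A\,\gamma^A$ to obtain the explicit representation
\begin{equation*}
\widehat{\gamma}^\mu = a\,\gamma^\mu + \frac{b}{J^2}\,J^\mu\,\slashed{J},\qquad \slashed{J}\equiv J_\alpha\gamma^\alpha,
\end{equation*}
with $a=\sqrt{\alpha}$ and $b(b+2a)=\beta$ following from (\ref{coef_tet_met}). Substituting this into (\ref{christoffel}) splits the Fock-Ivanenko connection as $\widehat{\Gamma}_\mu = \Gamma_\mu + \Delta\Gamma_\mu$, where $\Delta\Gamma_\mu$ is an explicit expression in derivatives of $a$, $b$, $J^\mu$ and in products of $\gamma$ matrices. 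The Christoffel difference $\widehat{\Gamma}^\rho_{\alpha\mu}-\Gamma^\rho_{\alpha\mu}$ needed there can be computed directly from (\ref{simp_disf_met}) and (\ref{inv_disformalmetric}) via the standard Levi-Civita formula for a conformally rescaled metric plus a rank-one correction along $V$.

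The crucial algebraic step is the following: every occurrence of $\slashed{J}\Psi$ produced by the disformal correction collapses to a scalar plus a pseudo-scalar action on $\Psi$ via the Pauli-Kofink identity (\ref{pkofink}) with $Q={\bf 1}$, namely $\slashed{J}\Psi = A\Psi + iB\gamma_5\Psi$. Grouping terms then rewrites the modified Dirac equation schematically as
\begin{equation*}
a\bigl(i\gamma^\mu\nabla_\mu\Psi - m\Psi\bigr) + \mathcal{R}(\Psi) = 0,
\end{equation*}
where the $\sqrt{\alpha}=a$ factor placed in (\ref{Dirac}) is precisely what allows the mass to reappear unrescaled, and $\mathcal{R}(\Psi)$ depends linearly on $\Psi$, $\gamma_5\Psi$ and on $\nabla_\mu\Psi$ projected through elements of the Clifford algebra. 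The class of spinors announced in the theorem is the one on which $\mathcal{R}(\Psi)=0$; this is achieved by imposing an Inomata-type ansatz \cite{Inomata} that expresses $\nabla_\mu\Psi$ as a linear combination of Clifford algebra elements multiplying $\Psi$, which by the same Pauli-Kofink identity collapses every remaining $\slashed{J}\nabla_\mu\Psi$ structure into a multiplicative operator on $\Psi$ that can be forced to cancel.

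The main obstacle will be specifying this class explicitly and checking that the resulting constraints on $a$, $b$ and the Inomata coefficients are simultaneously compatible with the current-conservation equations (\ref{cond_cons_cur}) and with the positivity requirements $\alpha>0$, $\alpha+\beta>0$ that make $V$ genuinely time-like. I expect the analysis to split naturally into subcases according to whether $m=0$ and whether $J^\mu J_\mu$ is strictly positive or vanishes, the light-like case being treated separately as the text already anticipates. Once Theorem $1$ is in place, the Corollary is immediate: if $\alpha$ is constant then the factor $a$ can be divided out of (\ref{Dirac}) globally, returning the genuine Dirac equation in $\widehat{g}$ with unchanged mass $m$; if $m=0$ the conformal factor never enters the dynamics at all, so $\Psi$ automatically solves the massless Dirac equation in both metrics.
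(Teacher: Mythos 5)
Your overall architecture is the same as the paper's: expand the disformal objects in target-metric quantities through the tetrad map, collect the residue as a nonlinear correction to the Dirac operator, kill that residue on an Inomata-type class of spinors, and then read off the Corollary from the fact that the only obstruction to genuine invariance is the factor $\sqrt{\alpha}$ multiplying $m$ in (\ref{Dirac}): if $\alpha$ is constant or $m=0$, that equation is already the true Dirac equation in $\widehat g$, while the Theorem supplies the true Dirac equation in $g$. One small correction to your last step: you cannot ``divide $a$ out of (\ref{Dirac}) globally,'' since $\widehat\gamma^\mu\widehat\nabla_\mu$ is not proportional to $a$ (in tetrad form it is $a\partial_A+bV_AV^B\partial_B-\widehat\Gamma_A$); the correct observation is simply that $\sqrt{\alpha}\,m$ becomes a constant, so (\ref{Dirac}) is a genuine (constant-mass) Dirac equation in $\widehat g$ without any division.

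The one genuine gap is in the cancellation step. The residual $\mathcal{R}(\Psi)$ is not purely of the form (Clifford element) acting on $\Psi$ with coefficients built algebraically from $\Psi$: after the reduction, Eq.\ (\ref{simp_disf_dirac}) contains the term $-\fracc{b^2}{2a(a+b)}\,\omega_A\gamma^A\gamma_5\Psi$, where $\omega_A$ is the vorticity of $V^\mu=J^\mu/\sqrt{J^2}$ and hence involves first derivatives of $\Psi$. The original Inomata condition that your plan invokes --- $\nabla_B\Psi$ equal to a linear combination of Clifford elements times $\Psi$ with semilinear coefficients --- cannot absorb this term: applying $\gamma^AV_AV^B$ to such an ansatz produces nothing proportional to $\omega_A\gamma^A\gamma_5\Psi$, so you would be forced into the irrotational subcase $\omega_A=0$, a strictly smaller class than the Theorem asserts. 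The paper's essential move is to \emph{generalize} the ansatz to (\ref{der_cond}) by adding the quasi-linear pieces $\gamma_B\gamma_C\omega^C(s_1+s_2\gamma_5)\Psi+\omega_B(s_3+s_4\gamma_5)\Psi$, fixing $s_3=-4s_1$, $s_4=-4s_2$ so the target Dirac equation survives, and then choosing $s_2=-ib/2(a+b)$ to cancel the vorticity term; the remaining $V_A$- and $h_A{}^C$-projected pieces close only after imposing the differential constraints (\ref{cond_mink_disf3}) and (\ref{cond_axial_mink}), whose solvability (via Helmholtz decomposition for $s_1$) must be checked --- so your compatibility worry is well placed but needs to be resolved, not just flagged. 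Apart from this, your Pauli-Kofink collapse $\slashed{J}\Psi=A\Psi+iB\gamma_5\Psi$ is a legitimate reorganization of the algebra that the paper itself only deploys in the light-like case.
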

The complete proof of the theorem and corollary corresponds to the rest of this and the next sections.

Let us introduce a tetrad basis with the same disformal symmetry as its associate metric (\ref{simp_disf_met}), that is
\begin{equation}
\label{mapa_tetr_simpl}
\widehat e^{\mu}{}_{A}= a\,e^{\mu}{}_{A} + bV^{\mu}V_{A},\quad \mbox{and} \quad \widehat e_{\mu}{}^{A}= \frac{1}{a} e_{\mu}{}^{A} - \frac{b}{a(a+b)} V_{\mu}V^{A}.
\end{equation}
The Fock-Ivanenko connection (\ref{christoffel}) can be rewritten as
\begin{equation}
\widehat\Gamma_{\mu} = \widehat e_\mu{}^A \widehat\Gamma_{A}, \qquad \mbox{with}\qquad \widehat\Gamma_{A}=-\frac{1}{8}\,\widehat\gamma_{BCA}[\gamma^{B},\gamma^{C}],
\end{equation}
where $\widehat\gamma_{ABC}$ is the spin connection defined as
\begin{equation}
\label{C_ABC}
\widehat\gamma_{ABC} = \frac{1}{2} (\widehat C_{ABC} - \widehat C_{BAC} - \widehat C_{CAB}) \quad \mbox{and} \quad
\widehat C_{ABC} = \widehat e_{\nu A} (\widehat{e}^{\mu}{}_{C}\,\partial_{\mu}\widehat{e}^{\nu}{}_{B} - \widehat{e}^{\mu}{}_{B}\,\partial_{\mu}\widehat{e}^{\nu}{}_{C}),
\end{equation}
with symmetry $\widehat C_{ABC} = - \widehat C_{ACB}$, implying $\widehat\gamma _{ABC} = - \widehat\gamma_{BAC}$.

The next step is to use Eq.\ (\ref{mapa_tetr_simpl}) and rewrite Eq.\ (\ref{Dirac}) completely in terms of the objects defined at the target metric. Thus, a first manipulation of Eq.\ (\ref{Dirac}) yields
\begin{equation}
\label{tetr_Dirac}
i\gamma^A(\widehat\partial_{A} - \widehat\Gamma_{A})\Psi \ -am\Psi = \ i\gamma^A\left(a\partial_A+bV_AV^B\partial_B +\frac{1}{8}\widehat\gamma_{BCA}[\gamma^{B},\gamma^{C}]\right)\Psi -am\Psi=0,
\end{equation}
where $\widehat \partial_A\equiv\widehat e^\mu{}_{A} \partial_{\mu}= a\partial_A +b V_AV^B\partial_B$. Then, using the algebraic identity between the $\gamma^A$ matrices
$$\gamma^A\gamma^B\gamma^C = \eta^{AB}\gamma^C + \eta^{BC}\gamma^A - \eta^{AC}\gamma^B + i\epsilon^{ABC}{}_{D}\gamma^D\gamma_5$$
to compute the expression $\widehat\gamma_{BCA}\gamma^A\gamma^B\gamma^C$, we get
\begin{equation}
i\left[\gamma^{A} \partial_{A} - m +\frac{b}{a}\gamma^A V_A V^B \partial_{B} +\frac{1}{4a}\left(2\widehat C^{A}{}_{BA}\gamma^B +\frac{i}{2} \widehat C_{ABC}\epsilon^{ABC}{}_{D}\gamma^D\gamma_5\right)\right] \, \Psi =0.
\end{equation}
From Eq.\ (\ref{C_ABC}), we encounter that
$$\widehat C^{A}{}_{BA}=aC^{A}{}_{BA}-\partial_{\mu}a\left[\frac{3a+2b}{a+b}\,e^{\mu}{}_B +\frac{b\,(4a+3b)}{a\,(a+b)}\,V^{\mu}V_B \right]+\nabla_{\mu}(bV^{\mu})V_B + b\dot V_B$$
and
$$\widehat C_{ABC}\epsilon^{ABC}{}_{D}=\left[a C_{ABC}+2b\, e_{\nu A}V_C\left( \dot e^{\nu}{}_B + \frac{b}{a + b}\nabla_{\mu}V^{\nu}e^{\mu}{}_B\right) \right]\epsilon^{ABC}{}_{D},$$
where dot ($\,\dot\,\,$) means covariant derivative in the target metric projected along $V^{\mu}$. 

Finally, using the condition coming from the conservation of the vector current (\ref{cond_cons_cur1}) restricted to the case we are dealing with, which is merely
\begin{equation}
\label{cond_corr}
\dot{a}=0,
\end{equation}
the equation (\ref{Dirac}) written in the disformal geometry becomes the following in the target space-time
\begin{equation}
\label{disf_dirac}
\begin{array}{l}
i\gamma^{A} \nabla_{A}\Psi -m\Psi +\fracc{ib}{a}\gamma^{A} V_{A}V^{B} \nabla_{B}\Psi - \fracc{ib}{2a}\left[\fracc{3a+2b}{b(a+b)}\,\partial_Ba -\fracc{\dot{b}}{b}V_B -(\nabla_{\mu}V^{\mu})V_B -\dot{V}_B\right]\gamma^B\Psi+\\[2ex]
+\fracc{b}{4a}\left\{e_{\nu}{}^{B}V_D \left[\dot e^{\nu}{}_{C} + \fracc{b}{a+b} \nabla_{\mu}V^{\nu}\, e^{\mu}{}_{C} \right] \epsilon_{AB}{}^{CD} \gamma^{A} \gamma_5 \right\}\Psi =0.
\end{array}
\end{equation}
Note that this is a nonlinear dynamical equation for $\Psi$ on ${\cal M}$ endowed with $g_{\mu\nu}$: its first two terms correspond to the Dirac operator $i\gamma^A\nabla_A-m{\bf 1}$ fully defined on the target metric and all the other self-interacting terms are originated by the relation between the tetrad bases. as far as we know, Eq.\ (\ref{disf_dirac}) does not fit any well-known nonlinear dynamics for a Dirac spinor \cite{fushchych}. Notwithstanding, we have shown that all solutions of Eq.\ (\ref{Dirac}) defined in the disformal metric given by (\ref{simp_disf_met}) are the same for this highly nonlinear equation in the target metric. In the next section, we will see that there are special classes of solutions for this equation which satisfies the exact Dirac equation in the target metric.

\section{Classes of Inomata-type solutions}\label{spec-clas}
With no loss of generality, we consider an arbitrary point $p\in{\cal M}$ and Riemannian normal coordinates around $p$, namely, the target metric is the Minkowski one at $p$. Consequently, the matrix representation of the tetrad basis $e^{\mu}{}_A$ is the Kronecker delta and the covariant derivatives reduce to partial derivatives ($\nabla_\mu\rightarrow\partial_\mu$). In this case, Eq.\ (\ref{disf_dirac}) is considerably simplified, yielding
\begin{equation}
\label{simp_disf_dirac}
i\gamma^{A} \partial_{A}\Psi -m\Psi +\fracc{ib}{a}\gamma^{A} V_{A}V^{B} \partial_{B}\Psi-\fracc{ib}{2a}\left[\frac{3a+2b}{b(a+b)}\,\partial_Ba-\frac{\dot{b}}{b}V_B-(\partial_{\mu}V^{\mu})V_B-\dot{V}_B\right]\gamma^B\Psi-\fracc{b^2}{2a(a+b)}\omega_A\gamma^{A} \gamma_5\Psi =0,
\end{equation}
where $\omega^A\doteq-\frac{1}{2}\epsilon^{ABCD}\omega_{BC}V_{D}$ is the vorticity vector and $\omega_{BC}\doteq\frac{1}{2}h_{B}{}^{\mu}h_{C}{}^{\nu}(\partial_{\nu}V_{\mu}-\partial_{\mu}V_{\nu})$ is the vorticity tensor both associated with $V_A$.

In Ref.\ \cite{Inomata}, Inomata found classes of $\Psi$s satisfying the Heisenberg equation by assuming that the derivative of the spinor field could be written as a linear combination of the elements of the Clifford algebra using semilinear coefficients depending on $\Psi$. However, in our case, some self-interacting terms depend also on derivatives of $\Psi$, in particular, the vorticity associated with the Dirac current and, thus, Inomata's
condition cannot be applied in general. Therefore, we propose a generalization of it assuming as an {\it Ansatz} that the coefficients of the linear combination of the elements of the algebra could involve also quasi-linear terms, that is

\begin{equation}
\label{der_cond}
\partial_B\Psi = \left(s_0 - \frac{mB\,\gamma_5}{3\sqrt{J^2}}\right)V_B\Psi - \left(s_0+\frac{imA}{\sqrt{J^2}}-\frac{4mB\,\gamma_5}{3\sqrt{J^2}}\right)\frac{(A-iB\gamma_5)\gamma_B}{4\sqrt{J^2}}\Psi +\gamma_B\gamma_C\omega^C(s_1+s_2\gamma_5)\Psi + \omega_B(s_3 + s_4\gamma_5)\Psi,
\end{equation}
where $s_0$ for $j=0,...,4$ are arbitrary functions of $\Psi$ to be determined. If we apply $\gamma^B$ to Eq.\ (\ref{der_cond}), we get the Dirac operator on the left-hand side. In order to have the Dirac equation satisfied, that is $i\gamma^{A} \partial_{A}\Psi-m\Psi=0$, we impose the following constraints on the free functions
$$s_3=-4s_1,\quad \mbox{and} \quad s_4=-4s_2.$$
Therefore, $\Psi$ given by Eq.\ (\ref{der_cond}) is a solution of the massive Dirac equation in the vicinity of $p\in{\cal M}$.

However, we need to guarantee that this class also verify the remaining equation constituted by the other terms of\ (\ref{simp_disf_dirac}). Then, applying $\gamma^{A}V_AV^B$ to Eq.\ (\ref{der_cond}) and substituting the outcome into Eq.\ (\ref{simp_disf_dirac}), we get
$$\left[\left(\fracc{3s_0}{4}-\fracc{imA}{4\sqrt{J^2}}+\fracc{\dot{b}}{2b} + \frac{1}{2}\partial_{\mu}V^{\mu}\right) V_A + s_1\omega_A  - \frac{3a+2b}{b(a+b)} \, h_{A}{}^{\hspace{-0.1cm}{}^{C}}\partial_{{}_{C}} \,a + \frac{1}{2}\dot V_A\right]\gamma^A\Psi +\left(is_2 -\frac{b}{2(a+b)}\right)\omega_A\gamma^A\gamma_5\Psi=0,$$
where we gather the terms according to the linear independence of the algebra elements and the orthogonality with respect to $V_A$. At the end, we obtain three equations determining completely the remaining free functions of (\ref{der_cond})
\begin{subequations}
\label{cond_mink_disf}
\begin{eqnarray}
s_0&=&\fracc{imA}{3\sqrt{J^2}}+\fracc{2}{3}\left(\ln \fracc{\sqrt{J^2}}{b}\right)^{\bullet},\label{cond_mink_disf1}\\[1ex]
s_2&=& -\fracc{ib}{2(a+b)},\label{cond_mink_disf2}\\[1ex]
\dot V_{{}_{A}}&=&\frac{3a+2b}{b(a+b)}\,h_{{}_{A}}{}^{\hspace{-0.15cm}{}^{C}}\partial_{{}_{C}}\,a - 2s_1\omega_{{}_{A}}\label{cond_mink_disf3}
\end{eqnarray}
\end{subequations}
plus the conservation law associated with the axial current (\ref{cond_cons_cur2}), with ${\cal S}_d=2amB$ and ${\cal S}_t=2mB$, restricted to the case (\ref{mapa_tetr_simpl}), that is
\begin{equation}
\label{cond_axial_mink}
2\frac{a'}{a}+\frac{a'+b'}{a+b}=0,
\end{equation}
where prime means covariant derivative projected along to $I^{\mu}$. Note that Eqs.\ (\ref{cond_mink_disf1}) and (\ref{cond_mink_disf2})  provides $s_0$ and $s_2$ algebraically in terms of $\Psi$ and the tetrad coefficients, but instead we have to solve the differential equation (\ref{cond_mink_disf2}) to find $s_1$. Due to Helmholtz's decomposition (also known as the fundamental theorem of the vector calculus) and the relation between $a$ and $b$ given by Eq.\ (\ref{cond_axial_mink}), this equation always admits a solution for $s_1$, once on the right hand side the first term is the gradient of a function of $a$ and the second one is a rotational. We have also used the conservation law for $J^{\mu}$ to write $\partial_{\mu}J^{\mu}=0$ as $\partial_{\mu}V^{\mu}=-(\ln \sqrt{J^2})^{\bullet}$.

It should be remarked that the disformal invariance of the Dirac equation can be restored without regarding to Riemannian normal coordinates if we modify the special class of solutions (\ref{der_cond}) properly through
\begin{equation*}
\partial_B\longrightarrow\nabla_B, \quad \mbox{and} \quad \omega_A\longrightarrow\omega_A+\epsilon_{AB}{}^{CD}e_{\nu}{}^{B}\,\dot e^{\nu}{}_{C}\,V_D.
\end{equation*}
Since the set of vectors $\{e^{\mu}{}_{B}\}$ form an orthonormal basis, $\dot e^{\mu}{}_{C}$ is precisely the acceleration associated with the tetrad frame and, therefore, the modifications caused by this in the equations above are not purely mathematical, but instead they have a clear physical interpretation in terms of the kinematical quantities of a congruence of curves.

Summarizing the results obtained above, we have shown that the disformal transformation of a modified Dirac equation written in $\widehat{g}_{\mu\nu}$ leads to a nonlinear equation for $\Psi$ in $g_{\mu\nu}$. It means that the solutions of the former also satisfy the latter. Then, we select a sub-class of $\Psi$s satisfying a generalized Inomata's condition in order that this nonlinear equation in $g_{\mu\nu}$ reduces to the Dirac equation
\begin{equation}
\label{target-dirac}
i\gamma^{\mu} \nabla_{\mu}\Psi-m\Psi=0.
\end{equation}
Therefore, we have provided sufficient conditions for the disformal map of the Dirac equation, as claimed by Theorem $1$, achieving besides the disformal invariance in case Corollary $1$ holds true. It is important to note that the search for explicit solutions of the Dirac equation satisfying the generalized Inomata condition (\ref{der_cond}) is a hard task. In particular, this quasi-linear system of PDEs is not verified for simple solutions of the Dirac equation.

\section{The case of light-like Dirac current}\label{null-case}
The results present above can be easily extended for the case when the Dirac current is light-like ($A,B=0$). However, the expression (\ref{simp_disf_met}) for the disformal metric is not appropriate because $V^{\mu}$ has $J^2$ in the denominator. Therefore, the procedure corresponds to replace $\beta\longrightarrow\beta J^2$ and $b\longrightarrow bJ^2$ in the disformal metric and its corresponding tetrad basis and substitute $V^{\mu}$ explicitly in terms of the Dirac current $J^{\mu}$. We then have
\begin{equation}
\label{simp_disf_met_null}
\widehat g^{\mu\nu}=\alpha g^{\mu\nu}+\beta\, J^{\mu}J^{\nu},\quad \mbox{and}\quad \widehat g_{\mu\nu}=\frac{1}{\alpha} g_{\mu\nu}-\fracc{\beta}{\alpha^2}\, J_{\mu}J_{\nu},
\end{equation}
with tetrad basis
\begin{equation}
\label{mapa_tetr_simpl_null}
\widehat e^{\mu}{}_{A}= a\,e^{\mu}{}_{A} + bJ^{\mu}J_{A},\quad \mbox{and} \quad \widehat e_{\mu}{}^{A}= \frac{1}{a} e_{\mu}{}^{A} - \frac{b}{a^2} J_{\mu}J^{A}.
\end{equation}
Once light-like currents are linked only to massless particles, for instance neutrinos, the disformal map consists in rewriting the massless Dirac equation defined in the disformal metric (Eq.\ \ref{Dirac} with $m=0$) in terms of the objects associated with the target metric. A straightforward calculation yields that the expression corresponding to Eq.\ (\ref{disf_dirac}) is
\begin{equation}
\label{disf_dirac_null}
i\gamma^{A} \nabla_{A}\Psi -\fracc{i}{2}(\dot J_B+3\partial_B a)\gamma^B\Psi -\fracc{b}{4a}\sigma_A\gamma^A\gamma_5\Psi=0,
\end{equation}
where $\sigma_A\equiv\epsilon_{AB}{}^{CD}e_{\nu}{}^{B}J_D\dot e^{\nu}{}_{C}$.

Thus, the sufficient conditions to have the disformal invariance of the Dirac equation are
\begin{equation}
\label{cons_null}
\partial_B(\ln a)=-\frac{1}{3b}\dot J_{B}, \quad \mbox{and} \quad \sigma_A=0.
\end{equation}
Note that these equations take into account only space-time objets, i.e., no mention to the elements of the Clifford algebra, constraining only one of the two coefficients of the disformal metric and the tetrad basis of the target metric.

There is a simple example in this case, if we assume the target metric as the Minkowski space-time in Cartesian coordinates (i.e., $e^{\nu}{}_{B}$ is the Kronecker delta) and $\Psi$ as a plane wave solution of the massless Dirac equation in this background:
\begin{equation}
\label{neut_spin_null}
\Psi=\Psi_0\left(\begin{array}{c}
1\\0\\-1\\0
\end{array}\right)e^{-iE(t+z)},
\end{equation}
where $\Psi_0$ is a constant amplitude and $E$ is the energy of the particle described by $\Psi$. The vector and axial currents are $J_{B} = 2|\Psi_0|^2 (1,0,0,1)= -I_{B}$. Therefore, Eqs.\ (\ref{cons_null}) are identically satisfied if we set $a$ equal to a constant. The conservation law of the axial current (\ref{cond_axial_mink}) impose that $b=b(t-z,x,y)$.

Finally, introducing auxiliary coordinates $\eta=(t-z)/a$, $\xi=(t+z)/a$, $\tilde x=x/a$ and $\tilde y=y/a$, the disformal metric associated with (\ref{neut_spin_null}) gives the infinitesimal line element
\begin{equation}
\label{disf_null}
\widehat{ds^2}=d\eta d\xi-d\tilde x^2-d\tilde y^2-\tilde\beta(\eta,x,y)d\xi^2,
\end{equation}
with $\tilde\beta=4\beta|\Psi_0|^4/a^2$. The only non-zero component of the Riemann curvature is $R_{\eta\xi\eta\xi}=\tilde \beta_{,\eta\,\eta}/2$, showing that the disformal metric is not flat in general. If we write
$$\partial_{B}\Psi=-iE\,\frac{J_B}{J_0}\Psi,\quad \mbox{and} \quad \gamma^{A}\widehat\Gamma_{A}\Psi=-\frac{1}{2} b_{,B} J^{B} J_A\gamma^A \Psi,$$
and use Pauli-Kofink (\ref{pkofink}) which gives $\gamma_AJ^A\Psi=0$, it is easy to see that Eq.\ (\ref{neut_spin_null}) satisfies the Dirac equation in the disformal metric.

\section{On the disformal group structure}\label{disf_group}
Now let us analyze one of the mathematical structures behind the disformal transformations. In the same manner as obtained for the scalar \cite{erico12} and the electromagnetic field \cite{erico13}, the disformal metrics associated with the spinor fields can be seen as members to a two-parameter group structure and the results of this section are valid for any disformal transformation of the form (\ref{simp_disf_met}). Suppose, to begin with, that we have fixed a space-time (smooth manifold $\cal{M}$ and a background Lorentzian metric $g$) and a spinor field $\Psi$ with its corresponding Dirac current $J^{\mu}$. Thus, we can make the identification
\begin{gather}
\label{groupeq}
\widehat{g}^{\mu\nu} = \alpha g^{\mu\nu} + \fracc{\beta}{g^{\gamma\delta}J_{\gamma}J_{\delta}} g^{\mu\rho}g^{\nu\sigma}J_{\rho}J_{\sigma} \longleftrightarrow \lceil \alpha,\beta\rceil g,	
\end{gather}
and use this to define the action of $\lceil \alpha,\beta\rceil$ on the metric $g$.
We shall denote  the set of all such $\lceil \alpha,\beta\rceil$ by $\mathfrak{G}$. That is,
\begin{eqnarray}
\mathfrak{G} = \{ \lceil \alpha,\beta\rceil\,|\,\ \alpha >0\,\, \mbox{and}\,\, \alpha+\beta>0\}.	
\end{eqnarray}

The set $\mathfrak{G}$ with the operation $\star$ defined by
\begin{gather}
(\lceil \alpha,\beta\rceil \star \lceil \alpha',\beta'\rceil)g \doteq\lceil \alpha\alpha',\alpha'\beta+\beta'\beta+\alpha\beta'\rceil g
\end{gather}
is a group. Before proving this statement, let us investigate the meaning of this composition law: when we first evaluate $\lceil \alpha',\beta'\rceil g$, we obtain Eq.\ (\ref{groupeq}) with primed $\alpha$ and $\beta$, i.e., the contra-variant components of a new metric tensor $\widehat{g}$. Then, the composition law is defined in such a way that
\begin{gather}
(\lceil \alpha,\beta\rceil \star \lceil \alpha',\beta'\rceil)g = \lceil \alpha,\beta\rceil \widehat{g}
\end{gather}
is true.
The proof that $(\mathfrak{G},\star)$ is a group is straightforward and it is shown bellow:
\begin{enumerate}
\item{Existence of the identity element: there exists an element   $\lceil 1,0\rceil \in\mathfrak{G}$ such that  for all $\lceil\alpha,\beta\rceil\in\mathfrak{G}$ holds
\begin{gather}
	\lceil 1,0\rceil\star \lceil\alpha,\beta\rceil=\lceil\alpha,\beta\rceil\star \lceil 1,0\rceil=\lceil\alpha,\beta\rceil.
\end{gather}
}
\item{Existence of the inverse element: for each $\lceil\alpha,\beta\rceil \in \mathfrak{G}$ there exists another element  $\left\lceil \frac{1}{\alpha},-\frac{\beta}{\alpha(\alpha+\beta)} \right\rceil \in\mathfrak{G}$ such that
\begin{gather}
\left\lceil \frac{1}{\alpha},-\frac{\beta}{\alpha(\alpha+\beta)} \right\rceil\star\,\lceil\alpha,\beta\rceil= \lceil\alpha,\beta\rceil \star \left\lceil \frac{1}{\alpha},-\frac{\beta}{\alpha(\alpha+\beta)} \right\rceil  =\lceil 1,0\rceil.
\end{gather}
}
\item{Associativity: if we consider $\lceil\alpha,\beta\rceil$, $\left\lceil \alpha', \beta' \right\rceil$,$\left\lceil \alpha'', \beta'' \right\rceil \in \mathfrak{G}$ , we have
\begin{eqnarray}
\label{assoc}
\left\lceil \alpha'', \beta'' \right\rceil\star\, \left(\left\lceil \alpha', \beta' \right\rceil\star\, \lceil\alpha,\beta\rceil\right)&=&\left\lceil \alpha'', \beta'' \right\rceil\star\,\left\lceil \alpha'\alpha,\alpha'\beta+\beta'\beta+\alpha\beta' \right\rceil\nonumber\\
&=&\left\lceil \alpha''\alpha'\alpha,\alpha''(\alpha'\beta+\beta'\beta+\alpha\beta')+\beta''(\alpha'\beta+\beta'\beta+\alpha\beta')+\beta''\alpha\alpha' \right\rceil\nonumber\\[2ex]
&=&\left\lceil \alpha''\alpha',\alpha''\beta'+\beta''\beta'+\beta''\alpha'\right\rceil\star \lceil\alpha,\beta\rceil\nonumber\\
&=&\left(\left\lceil \alpha'', \beta'' \right\rceil\star \left\lceil \alpha', \beta' \right\rceil\right)\star \lceil\alpha,\beta\rceil.
\end{eqnarray}}
\end{enumerate}
Besides, it is direct to check that the disformal group $(\mathfrak{G},\star)$ is in fact an Abelian group.

On the other hand, using Eq.\ (\ref{inv_disformalmetric}), we find that the inverse of $\lceil \alpha, \beta \rceil g$ is written as
\begin{gather}
(\lceil \alpha, \beta \rceil g)^{-1} = \frac{1}{\alpha}g_{\mu\nu} - \frac{\beta}{\alpha(\alpha + \beta)}V_{\mu}V_{\nu},
\end{gather}
which motivates the definition of another operation $\lfloor \alpha, \beta \rfloor$ acting on $g$ by
\begin{gather}
\lfloor \alpha, \beta \rfloor g = \frac{1}{\alpha}g_{\mu\nu} - \frac{\beta}{\alpha(\alpha + \beta)}\frac{J_{\mu}J_{\nu}}{g^{\gamma\delta}J_{\gamma}J_{\delta}},
\end{gather}
and an analogous composition law $\odot$ given by
\begin{gather}\label{floorop}
(\lfloor \alpha,\beta\rfloor \odot \lfloor \alpha',\beta'\rfloor)g \doteq\lfloor \alpha\alpha',\alpha'\beta+\beta'\beta+\alpha\beta'\rfloor g.
\end{gather}
It should be clear in the notation that $\lceil \alpha, \beta \rceil g$ and $\lfloor \alpha, \beta \rfloor g$ are related to the contra-variant and the covariant form of the metric $g$, respectively. As the reader should note, the set  $\mathfrak{H}=\{ \lfloor \alpha,\beta\rfloor\,|\,\ \alpha >0\,\, \mbox{and}\,\, \alpha+\beta>0\}$ with the operation $\odot$ acting on $g$ according to (\ref{floorop}) is also an Abelian group and  $((\lceil \alpha, \beta \rceil \star \lceil\alpha',\beta'\rceil) g)^{-1} = (\lfloor\alpha,\beta\rfloor\odot\lfloor\alpha',\beta'\rfloor)g$. In other words, the inverse of the composition of two disformal transformations is the same as the composition of the inverses of those disformal transormations. This is a crucial fact that allow us, when dealing with disformal transformations, to define with no ambiguity  $\widehat{g}_{\mu\nu}$ from the given $\widehat{g}^{\mu\nu}$. Furthermore, the application
\begin{eqnarray*}
\phi : (\mathfrak{G},\star)\longrightarrow(\mathfrak{H},\odot)\\
\lceil\alpha,\beta\rceil\mapsto\lfloor\alpha,\beta\rfloor,
\end{eqnarray*}
satisfies
\begin{gather}
\phi(\lceil\alpha,\beta\rceil\star\lceil\alpha',\beta'\rceil) = \phi(\lceil\alpha,\beta\rceil)\odot\phi(\lceil\alpha',\beta'\rceil),
\end{gather}
hence it is a group homomorphism. In fact, $\phi$ is a group isomorphism, as expected.

As particular examples of disformal sub-groups, we have the case when all conformal coefficients are equal to $1$ which makes the disformal metrics similar to those from the spin-2 field theory formulation, but with finite inverse metric, and the cases in which the disformal coefficients are zero ($\beta's=0$), coinciding with the usual conformal group.

\section{Concluding remarks}\label{conclusion}
We have analyzed the action of the disformal transformations on the case of propagating spinor fields making use of the Weyl-Cartan formalism. In particular, we have shown that generalizing the Inomata condition it is possible to find a class of solutions which let the Dirac equation almost invariant under these maps, up to a conformal factor in the mass term. That is, they verify this equation in the disformal and target metrics once a set of conditions are fulfilled. However, explicit expressions for $\Psi$ satisfying the hypotheses of Theorem $1$ are not simple. Preliminary attempts have pointed out that if $\Psi$ is a superposition of plane waves this could be solved, but we postpone the complete analysis for the near future.

The situation in which the norm of $J^{\mu}$ is zero could be easily obtained from previous results by modifying the disformal term in the metric and the tetrad basis, and then taking the limit $J^{2} \rightarrow 0$. In comparison with the case $J^2\neq0$, this leads to a simpler set of equations for $\Psi$ in the target metric and requiring only two conditions upon space-time objects for the disformal invariance of the massless Dirac equation. In this context, a simple example in terms of plane waves allows for the construction of a curved disformal metric. It should be noticed that Eq.\ (\ref{simp_disf_met_null}) correspond to the Kerr-Schild metrics of general relativity, which means that the disformal invariance of the Dirac equation could be useful in the study of propagating spinor fields on these backgrounds.

At the end, we have demonstrated that the disformal transformations in the spinorial case satisfy an Abelian group structure. Therefore, as the conformal transformations do, the disformal transformations appear as a new symmetry that all fundamental field (scalar, vector and now spinor) satisfying a PDE can be invariant under certain circumstances. To be promoted as a fundamental symmetry of nature, the complete physical meaning of this kind of transformation is still under investigation.

\section*{Acknowledgments}
We acknowledge the participants of the \textit{Seminario Informale} at Sapienza University for their comments, in particular, we wish to thank Dr.\ Grasiele Santos for suggesting improvements to the manuscript. The authors are supported by the CAPES-ICRANet program (BEX 13956/13-2, 14632/13-6, 15114/13-9).

\end{document}